\let\footnote=\endnote
 \def\newblock{\ }%
\def\N{\mathbb{N}}
\def\p{\mathbb{P}}
\def\E{\mathbb{E}}
\def\R{\mathbb{R}}
\def\X{\mathcal{X}}
\def\cov{\mathrm{cov}}
\def\X{\mathcal{X}}
\def\d{\mathrm{d}}
\DeclareMathOperator*{\esssup}{ess\text{-}sup}
\DeclareMathOperator*{\essinf}{ess\text{-}inf}
\newcommand{\VaR}{\mathrm{VaR}}
\newcommand{\ES}{\mathrm{ES}}
\newcommand{\RVaR}{\mathrm{RVaR}}
\def\id{\mathds{1}}
\renewcommand{\(}{\left(}
\renewcommand{\)}{\right)}
\renewcommand{\]}{\right]}
\def\laweq{\buildrel \d \over =}
\begin{document}


\RUNAUTHOR{Chen, Embrechts, and Wang}

\RUNTITLE{An unexpected stochastic dominance}

\TITLE{An unexpected stochastic dominance: Pareto distributions, dependence, and diversification}

\ARTICLEAUTHORS{%
\AUTHOR{Yuyu Chen}
\AFF{Department of Economics, University of Melbourne,  Australia, \EMAIL{yuyu.chen@unimelb.edu.au}} 
\AUTHOR{Paul Embrechts}
\AFF{RiskLab, Department of Mathematics \& ETH Risk Center, ETH Zurich,  Switzerland, \EMAIL{embrechts@math.ethz.ch}}
\AUTHOR{Ruodu Wang}
\AFF{Department of Statistics and Actuarial Science, University of Waterloo,  Canada, \EMAIL{wang@uwaterloo.ca}}
} 

\ABSTRACT{%
We find the perhaps surprising inequality
that the weighted average of independent and identically distributed Pareto random variables with infinite mean is larger than one such random variable in the sense of first-order stochastic dominance.
This result holds for more general models including super-Pareto distributions, negative dependence, and triggering events, and yields superadditivity of the risk measure Value-at-Risk for these models.
}%


\KEYWORDS{Pareto distributions; diversification effect; risk pooling; first-order stochastic dominance} 

\maketitle

%


\section{Introduction}\label{sec:1}
 Pareto distributions are arguably the most important class of heavy-tailed loss distributions, due to their connection to  regularly varying tails, Extreme Value Theory (EVT), and power laws in economics and social networks; see, e.g., \cite{EKM97}, \cite{DF06} and \cite{G09}. In quantitative risk management,  Pareto distributions are frequently used to model losses from catastrophes such as earthquakes, hurricanes, and wildfires; see, e.g., \cite{embrechts1999extreme}. They are  also widely used  in economics 
 for wealth distributions (e.g., \cite{T20}) and modeling the tails of financial asset losses and operational risks (e.g., \cite{MFE15}); applications of power laws in economics, finance, and insurance are treated in \cite{ibragimov2015heavy} and \cite{ibragimov2017heavy}. \cite{andriani2007beyond} listed over $80$ examples of power laws in diverse fields of applications. By the 
 Pickands-Balkema-de Haan Theorem (\cite{P75} and \cite{BD74}),
  generalized Pareto distributions are the only possible non-degenerate limiting distributions of the residual lifetime of random variables exceeding a high level. 
  
 In the realm of banking and insurance,
 distributions with infinite mean occur as a possible mathematical model in several contexts.
For instance,  catastrophic losses, operational losses, large insurance losses,  and financial returns from
technological innovations are often modelled by Pareto distributions without finite mean; 
see \cite{EKM97} in the context of extreme value theory,
  \cite{hofert2012statistical} on nuclear power accidents, and the more recent \cite{cheynel2022fraud} on modeling fraud.
In risk management, such infinite-mean models often lead to intriguing phenomena, such as the diversification disaster studied by \cite{IJW09,IJW11}.

 Stochastic dominance relations are an important tool in economic decision theory, which allows for the analysis of  risk preferences for a group of decision makers (\cite{HR69}).
They have been studied in the forms of first and second degrees (\cite{quirk1962admissibility}, \cite{HR69,HR71} and \cite{RS70}),   larger integer degrees (\cite{whitmore1970third} and \cite{CP96}), and fractional degrees (\cite{MSTW17} and \cite{HTZ20}), and they are widely applied in the expected utility and dual utility theory (\cite{Y87}),   behavioural decision models (\cite{CKS87}, \cite{BH06} and \cite{SZ08}), and risk measures (\cite{FS16}). See also \cite{levy1992stochastic,L16} for the wide applicability of stochastic dominance relations in decision making, and \cite{MS02} and \cite{SS07} for the mathematics of stochastic dominance.

 The strongest form of   commonly used  stochastic dominance relations is first-order stochastic dominance. 
For two random variables $X$ and $Y$ representing random losses, we say $X$ is smaller than $Y$ in \emph{first-order stochastic dominance}, denoted by $X\le_{\rm st}Y$, if $\p(X\le x)\ge \p(Y\le x)$ for all $x\in\R$.  
The relation $X\le_{\rm st}Y$ means that all decision makers with an increasing utility function will prefer  loss $X$ to  loss $Y$ if their expected utilities are finite. 
 In this paper, all terms like ``increasing" and ``decreasing" are in the non-strict sense.

For iid random variables $X_{1},\dots,X_{n}$ following a Pareto distribution with infinite mean
and weights $\theta_{1},\dots,\theta_n\ge 0$ with $\sum_{i=1}^{n}\theta_i=1$, one consequence of our main result, Theorem \ref{thm:1}, is the stochastic dominance relation
\begin{equation}\label{eq:*}
X_{1}\le_{\rm st}\theta_{1}X_{1}+\dots+\theta_{n}X_{n},
\end{equation}  
and \eqref{eq:*} is not an equality except for the trivial case that only one of $\theta_1,\dots,\theta_n$ is non-zero. 
As far as we are aware,   \eqref{eq:*} is not known in the literature, even in the case that $\theta_1,\dots,\theta_n$ are equal (i.e., they are $1/n$).

To appreciate the nature of \eqref{eq:*}, we first recall that for any identically distributed random variables $X_1,\dots,X_n$ with finite mean, regardless of their distribution or dependence structure, for $\theta_1,\dots,\theta_n>0$ with $\sum_{i=1}^n \theta_n=1$,  \eqref{eq:*}
can only hold if $X_1=\dots =X_n$ (almost surely),
in which case we have the trivial equality $X_1= \theta_{1}X_{1}+\dots+\theta_{n}X_{n} $; see Proposition \ref{prop:equality}. Therefore, the assumption of infinite mean is very important for \eqref{eq:*} to hold.

It is somewhat surprising that, for infinite-mean Pareto losses, inequality \eqref{eq:*} holds for a very  strong form of risk comparison: 
For every decision maker with a risk preference favouring less loss over more loss and \emph{well defined} for losses in \eqref{eq:*},
a diversified portfolio of such iid Pareto losses is less preferred to a non-diversified one.
Flipping the sign, diversification is preferred if the Pareto random variables are treated as profits or gains from, for instance, research and development. We call such a stochastic dominance ``unexpected" for both its surprising nature and the infinite expectations involved.

The infinite mean assumption comes with a caveat: for a risk-averse expected utility decision maker, due to the concavity of the utility function, \eqref{eq:*} does not imply a preference for non-diversification, because losses in both sides of \eqref{eq:*} have $-\infty$ expected utility.  
It does, however,    give \emph{superadditivity} of the regulatory risk measure Value-at-Risk (VaR) in banking and insurance sectors; that is, the weighted average of super-Pareto losses gives a larger  VaR than the weighted average of VaRs of individual super-Pareto losses.
Different from the literature on VaR superadditivity for regularly varying distributions (e.g., \cite{ELW09} and \cite{MFE15}), the superadditivity of VaR implied by \eqref{eq:*} holds for all probability levels, and this is not in an asymptotic sense.

Observations similar to \eqref{eq:*}, although with less generality, occur in the literature in different forms. 
\cite{S67} mentioned that having an infinite mean in portfolio diversification may lead to a worse distribution; see also   \citet[p.~271]{FM72} and \cite{M72}. 
Inequality \eqref{eq:*} for $n=2$ and the Pareto tail parameter $\alpha =1/2$  (see Section \ref{sec:2} for the parametrization) has  an explicit formula in Example 7 of \cite{embrechts2002correlation}. Simple numerical examples are provided by \citet[Figure 5.2]{EP10} and \citet[Table 2]{bauer2016marginal}.
\cite{ibragimov2005new} showed that \eqref{eq:*} holds for iid positive one-sided stable random variables with infinite mean.
Another relevant result  of \cite{ibragimov2009portfolio} is  that for iid  random variables $Z_1,\dots,Z_n$ which follow a convolution of symmetric stable distributions without finite mean, $\p(\theta_1Z_1+\dots+\theta_nZ_n\le x)\le \p(Z_1\le x)$ for $x>0$ but the opposite holds for $x<0$ (and hence   first-order stochastic dominance does not hold). 
The symmetry of distributions is essential for this inequality, and    $Z_1,\dots,Z_n$ can take negative values, unlike Pareto losses, which are positive,  skewed and more suitable for the modeling of extreme losses.


In Section \ref{sec:2}, we first introduce super-Pareto distributions,
a class of infinite-mean distributions, 
and  weak negative association, 
a  notion of dependence weaker than negative association (\cite{AS81} and \cite{JP83}). The class of super-Pareto distributions 
includes all infinite-mean Pareto distributions. 
Our main result, 
Theorem \ref{thm:1}, shows  that \eqref{eq:*} holds if   $X_1,\dots,X_n$ are weakly negatively associated super-Pareto random variables,
and also in case they are  triggered by some events. Some discussions on this result are provided after its proof.
 Section \ref{sec:7} concludes. 
 

   We fix some notation. Throughout, random variables are defined on an atomless probability space $(\Omega,\mathcal F,\p)$. Denote by $\mathbb N$ the set of all positive integers and $\R_+$ the set of non-negative real numbers. For $n\in \N$, let $[n]=\{1,\dots,n\}$.  Denote by $\Delta_n$   the standard simplex, that is, $\Delta_n=\{(\theta_1,\dots,\theta_n)\in [0,1]^n:  \sum_{i=1}^n \theta_i=1\}$. 
For $x,y\in \R$, write $x\wedge y= \min\{ x,y\}$, $x\vee y= \max\{ x,y\}$, and $x_+=\max\{x,0\}$.  We write $\mathbf X\laweq \mathbf Y$ if $\mathbf X$ and $\mathbf Y$ have the same distribution. We always assume $n\ge 2$. Equalities and inequalities are
interpreted component-wise when applied to vectors.
 For any random variable $X$, its essential infimum is given by $z_X=\inf\{z\in \R: \p(X>z)>0\}$.



\section{Stochastic dominance for super-Pareto risks}\label{sec:2}

\subsection{Super-Pareto distributions and weak negative association}

 We first introduce the Pareto distribution and the super-Pareto distribution. 
A common parameterization of Pareto distributions is given by, for $\theta,\alpha>0$, the cumulative distribution function
 \begin{align*}
  P_{\alpha,\theta}(x) = 1 -\left(\frac{\theta}{x}\right)^{\alpha},~~x\ge \theta.
  \end{align*}
As $\theta$ is a scale parameter, it suffices to study   $ P_{\alpha,1} $, which we write   as $ \mathrm {Pareto}(\alpha)$.
 The mean of $  P_{\alpha,\theta}$ is infinite if and only if the tail parameter $\alpha $ is in $ (0,1]$. We say that the $  P_{\alpha,\theta}$ distribution is \emph{extremely heavy-tailed} if $\alpha\le 1$.

\begin{definition}\label{def:1}
A random variable   $X$ is \emph{super-Pareto} (or has a super-Pareto distribution)  if $X\laweq f(Y)$ for some increasing, 
convex, and non-constant function $f$ and $Y\sim \mathrm{Pareto}(1)$. Moreover, $X$ is \emph{regular} if 
$f(0)=0$ and $f(1)>0$.  
\end{definition} 
All extremely heavy-tailed Pareto distributions are super-Pareto and regular. By definition, the super-Pareto property is preserved under increasing, convex, and non-constant transforms, including location-scale transforms. 
Intuitively, increasing convex transforms, such as $x\mapsto x^\beta$ for $\beta>1$   and  $x\mapsto \exp(x)$, generally make the tail of a random variable heavier. 
Thus, super-Pareto risks have  heavier tails than (or equivalent to) Pareto($1$) risks.
It is straightforward to check that any super-Pareto random variable has infinite mean.

Some examples of the super-Pareto family include the generalized Pareto distribution when $\xi\ge1$, specified  by 
 $$
\p(X\le x)=1-\(1+\xi\frac{x}{\beta}\)^{-1/\xi},~~~ x\ge 0,
 $$
where   $\beta>0$,
and the Burr distribution when $\alpha,\tau\in (0,1]$, specified  by 
$$\p(X\le x)=1-\(\frac{1}{x^\tau+1}\)^\alpha,~~~ x\ge 0.$$
Special cases of the Burr family are the paralogistic ($\alpha=\tau$) and the log-logistic ($\alpha=1$) distributions.

The next proposition gives an equivalent formulation for super-Pareto distributions, which will become useful in proving some results. 
 
\begin{proposition}\label{prop:R2-1} 
A random variable $X$ with essential infimum $z_X\in \R$ is   {super-Pareto} 
 if and only if the function 
 $g: x\mapsto 1/\p(X>x)
 $ is strictly increasing and concave 
 on $[z_X,\infty)$. 
If further $X$ is  {regular}, then 
$z_X> 0$  
and
$g(x) \le  x /z_X$ for $x \ge z_X$.  
\end{proposition}



Next, we introduce a new notion of negative dependence.  

\begin{definition}

A set $S\subseteq \R^{k}$, $k\in \N$ is  \emph{decreasing} 
if $\mathbf x\in S$ implies $\mathbf y\in S$ for all $\mathbf y\le \mathbf x$. Random variables $X_1,\dots,X_n$ are \emph{weakly negatively associated} if  
 for any $i\in[n]$,  decreasing set $S  \subseteq  \R^{n-1}$, and $x\in \R$ with $\p(X_i\le x)>0$,  
\begin{equation}\label{eq:WNA}
 \p(\mathbf X_{-i} \in S  \mid  X_i\le  x) \le \p(\mathbf X_{-i}\in S),
 \end{equation} 
 where $\mathbf X_{-i}=(X_1,\dots,X_{i-1}, X_{i+1},\dots,X_n)$.
\end{definition}

Weak negative association includes independence as a special case. 
Intuitively, \eqref{eq:WNA} means that knowing $X_i$ is small implies that $\mathbf X_{-i}$ is less likely to be small, thus a concept of negative dependence. 
Moreover, \eqref{eq:WNA} implies, by flipping signs,
\begin{equation}\label{eq:inset}
\p(\mathbf X_{-i} > \mathbf x  \mid  X_i>  x) \le \p(\mathbf X_{-i}> \mathbf x)
\end{equation}
for   $i\in[n]$, $x\in \R$, and  
 $\mathbf x \in \R^{n-1}$. 

Weak negative association is weaker than the popular notion of negative association (\cite{AS81}  and \cite{JP83}),  hence the name.
Examples satisfying negative association, such as normal distributions with non-positive correlations, are studied by \cite{JP83}. 
It is also implied by  negative regression dependence (\cite{L66} and \cite{block1985concept}), and it implies negative orthant dependence (\cite{block1982some}); see Remark \ref{rem:ND} for more details on these notions of dependence.

In most results, we consider weakly negatively associated and identically distributed (WNAID) super-Pareto random variables. This setting includes the iid Pareto($\alpha$)  model for $\alpha \in (0,1].$

\subsection{The main result}
For random variables $X$ and $Y$, we write $X<_{\rm st}Y$ if $\p(X>x)<\p(Y>x)$ for all $x\in \R$ satisfying $\p(X>x)\in (0,1)$, and this will be referred to as strict stochastic dominance. 
Note that this condition is stronger than a different notion of strict stochastic dominance defined by $X\le _{\rm st} Y$ and $X\not \ge_{\rm st} Y$.
The following theorem is  our main result. 
\begin{theorem}\label{thm:1}
Suppose that
 $X_1,\dots,X_n$ are WNAID super-Pareto,
$(\theta_1,\dots,\theta_n)\in\Delta_n$, and $X\laweq X_1$.
 \begin{enumerate}[(i)]
\item  Stochastic dominance holds:
\begin{equation} \label{eq:maineq1} 
X \le_{\rm st}\sum_{i=1}^n\theta_{i}X_{i} ,
\end{equation}
and strict stochastic dominance $X <_{\rm st}\sum_{i=1}^n\theta_{i}X_{i}$ holds if $\theta_i>0$ for at least two $i\in [n]$.
\item  If $X$ is regular, then for any events  $A_1,\dots,A_n$  independent of $(X_1,\dots,X_n)$ and  event $A$ independent of $X$  satisfying $\p(A)=\sum_{i=1}^n \theta_i\p(A_i)$,  we have
\begin{equation} \label{eq:maineq2} 
X \id_A\le_{\rm st}\sum_{i=1}^n\theta_{i}X_{i}\id_{A_i}.
\end{equation}
\end{enumerate}
\end{theorem}  

We will say that a \emph{diversification penalty} exists if \eqref{eq:maineq1} or \eqref{eq:maineq2} holds.  
To interpret Theorem \ref{thm:1}, the left-hand side of \eqref{eq:maineq1}  can be regarded as the loss of an agent who keeps their own risk, and the right-hand side is the loss of an agent who shares risks with other agents.    
By pooling among super-Pareto losses,  agents expect to suffer less loss when their own loss occurs. However, every agent in the pool will have a higher frequency of bearing losses. Hence, diversification has two competing effects on the loss portfolio: It increases the frequency of losses and decreases the sizes of individual losses.  The above results show that the combined effects of diversification of super-Pareto losses lead to a higher default probability at any capital reserve level, that is, $\p(\sum_{i=1}^n\theta_{i}X_{i}>x)> \p(X>x)$ for all $x>1$. 


The model in Theorem \ref{thm:1} (ii) 
reflects 
that catastrophic losses are large losses but usually occur with very small probabilities.
Such losses are usually modelled by $X\id_A$, where $X$ is a positive random loss, and $A$ is an event with a small probability, indicating the occurrence of a catastrophe, such as an earthquake or a flood.
Note that in (ii), the events $A_1,\dots,A_n$ are arbitrary, meaning that  $X_1,\dots,X_n$ may be caused by either the same or different triggering events. 
In particular, if $A_1,\dots,A_n$ are the same, then $A$ can also be chosen as the same event, and in this case \eqref{eq:maineq2} follows from \eqref{eq:maineq1}.
More generally,  the condition $\p(A)=\sum_{i=1}^n \theta_i\p(A_i)$ makes it fair to compare the two sides of \eqref{eq:maineq2}; for instance, if $X$ has a finite mean instead of being super-Pareto, then both sides of \eqref{eq:maineq2} would have the same mean.
  Although our setting mainly concerns the losses of an agent,  it  is also  applicable to the setting of investment decisions. For instance, $X_1,\dots,X_n$ can represent profits from technology innovations modelled by $A_1,\dots,A_n$; negatively dependent profits may arise  in innovation races (see e.g., \cite{shapiro1999information}).  

An immediate consequence of Theorem \ref{thm:1}  is that  if super-Pareto random variables $X_1,\dots,X_n$ are independent and comparable in first-order stochastic dominance, for $(\theta_1,\dots,\theta_n)\in\Delta_n$, we have $ 
X_{i^*}\le_{\rm st}\sum_{i=1}^n\theta_{i}X_{i}$ if $X_{i^*}\le_{\rm st}X_i$ for all $i\in[n]$.

To better understand   the result in Theorem \ref{thm:1}, 
we stress that \eqref{eq:maineq1} and \eqref{eq:maineq2} cannot be expected if $X_1,\dots,X_n$ have finite mean, regardless of their dependence structure, as summarized in the following proposition. 
\begin{proposition}
\label{prop:equality}
For $\theta_1,\dots,\theta_n>0$ with $\sum_{i=1}^n \theta_n=1$ and identically distributed random variables $X,X_1,\dots,X_n$ with finite mean and any dependence structure, 
\eqref{eq:maineq1} holds if and only if $X_1=\dots=X_n$ almost surely.
\end{proposition}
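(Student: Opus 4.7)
\textbf{Proof plan for Proposition \ref{prop:equality}.} The ``if'' direction is trivial, as $X_1=\dots=X_n$ almost surely gives $\sum_{i=1}^n\theta_i X_i=X_1\simeq_{\rm st}X$. For the converse, set $S=\sum_{i=1}^n\theta_i X_i$; the plan is to combine first-order stochastic dominance with the convexity of the map $y\mapsto(y-c)_+$ to produce, for every $c\in\R$, an almost-sure pointwise identity that forces all the $X_i$ to coincide.

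For each $c\in\R$ I would run a two-sided sandwich (all expectations involved are finite since $\E|X|<\infty$ and $\E|S|\le\sum_i\theta_i\E|X_i|=\E|X|$). Applying $X\le_{\rm st} S$ to the increasing function $y\mapsto(y-c)_+$ gives $\E[(X-c)_+]\le\E[(S-c)_+]$. In the reverse direction, the pointwise convexity inequality $(S-c)_+=\bigl(\sum_i\theta_i(X_i-c)\bigr)_+\le\sum_i\theta_i(X_i-c)_+$, combined with $X_i\simeq_{\rm st}X$ and $\sum_i\theta_i=1$, yields $\E[(S-c)_+]\le\sum_i\theta_i\E[(X_i-c)_+]=\E[(X-c)_+]$. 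Hence $\E\bigl[\sum_i\theta_i(X_i-c)_+-(S-c)_+\bigr]=0$, and since this random variable is non-negative, it vanishes almost surely.

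A short case analysis (separating whether $\sum_i\theta_i(y_i-c)$ is positive or non-positive) shows that the pointwise identity $\sum_i\theta_i(y_i-c)_+=\bigl(\sum_i\theta_i(y_i-c)\bigr)_+$, when all $\theta_i>0$, can only hold when all the values $y_i-c$ share a common sign. Consequently, for each fixed $c\in\R$, almost surely $\max_iX_i\le c$ or $\min_iX_i\ge c$. Applying this for each $c\in\mathbb{Q}$ and taking a countable union gives $\p(\min_iX_i<\max_iX_i)=\p\bigl(\bigcup_{c\in\mathbb{Q}}\{\min_iX_i<c<\max_iX_i\}\bigr)=0$, i.e., $X_1=\dots=X_n$ almost surely. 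The main obstacle is the sign-analysis step identifying exactly when equality holds in the convex inequality for $(\cdot-c)_+$; the remainder is routine manipulation with stochastic dominance and the identical-distribution hypothesis, and the whole argument hinges on the finite-mean assumption, which is exactly what fails in the ultra heavy-tailed regime of Theorem \ref{thm:1}.
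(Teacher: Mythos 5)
Your proposal is correct, and it takes a genuinely different route from the paper. The paper's proof works with Expected Shortfall: first-order stochastic dominance gives $\ES_p(X)\le \ES_p(\sum_{i=1}^n\theta_iX_i)$ for all $p\in(0,1)$; convexity of $\ES_p$ together with the identical-distribution hypothesis squeezes this into the additivity identity $\ES_p(\sum_{i=1}^n\theta_iX_i)=\sum_{i=1}^n\ES_p(\theta_iX_i)$ for every $p$; the paper then invokes Theorems 5 and 3 of Wang and Zitikis (2021) to conclude that $(\theta_1X_1,\dots,\theta_nX_n)$ is $p$-concentrated for each $p$ and hence that $(X_1,\dots,X_n)$ is comonotonic, and comonotonic identically distributed random variables are almost surely equal. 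You run the same sandwich but with the stop-loss transform $c\mapsto\E[(X-c)_+]$ in place of $\ES_p$ (these carry equivalent information for integrable random variables), and — this is the substantive difference — you replace the appeal to the $p$-concentration/comonotonicity machinery by a direct, elementary identification of the equality case in Jensen's inequality for $y\mapsto(y-c)_+$, followed by a countable union over rational thresholds $c$. Your sign analysis is correct: with all $\theta_i>0$, equality $\sum_i\theta_i(y_i-c)_+=\bigl(\sum_i\theta_i(y_i-c)\bigr)_+$ fails exactly when $\min_iy_i<c<\max_iy_i$, and the integrability checks ($\E[(S-c)_+]\le\sum_i\theta_i\E[(X_i-c)_+]<\infty$) are all in order. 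What each approach buys: the paper's version is shorter on the page but outsources the key rigidity step to external structural results about ES and comonotonicity; yours is fully self-contained, more elementary, and makes completely explicit where the finite-mean hypothesis enters (finiteness of the stop-loss transforms), which is exactly the hypothesis that breaks down in the ultra heavy-tailed regime of Theorem \ref{thm:1}.
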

 Proposition \ref{prop:equality} implies, in particular, that
\eqref{eq:maineq1} never holds for WNAID non-degenerate random variables $X,X_1,\dots,X_n$ with finite mean.
Even if   $X,X_1,\dots,X_n$  have an infinite mean, 
we are not aware of any other distributions for which \eqref{eq:maineq1} and \eqref{eq:maineq2} hold other than the super-Pareto distributions studied in this paper.  

Risk measures are popular tools to quantify  the risk of a financial portfolio; see \cite{ADEH99} and \cite{FS16}. 
Value-at-Risk (VaR) is one of the   most widely used classes of risk measures in  financial regulation.
  For a random variable  $X$ with distribution function $F_X$, VaR at level  $p\in (0,1)$ is defined as the 
  (left) quantile of $X$ at $p$, that is,
$$
  \VaR_{p}(X)=F^{-1}_X(p)=\inf\{t\in\R:F_X(t)\geq p\}.
$$
 For any random variable $X$, $\VaR_p(X)$ is always finite, making it suitable for assessing losses with infinite mean.   Moreover, for two random variables $X$ and $Y$, $X\le_{\rm st} Y$ (resp.~$X<_{\rm st} Y$) if and only if $\VaR_p(X)\le \VaR_p(Y)$ (resp.~$\VaR_p(X)< \VaR_p(Y)$) for all $p\in (0,1)$.   
Noting the equality $\VaR_p(X_1)= \sum_{i=1}^n\theta_i\VaR_{p}(X_i)$ for iid random variables,    Theorem \ref{thm:1} gives superadditivity:
\begin{equation}\label{eq:result-VaR-2}\VaR_{p}\left( \sum_{i=1}^n \theta_{i}X_{i}\right) >  \sum_{i=1}^n\theta_i\VaR_{p}(X_i).\end{equation} 
Inequality \eqref{eq:result-VaR-2} and its non-strict version can be intuitively seen as diversification penalty for $\VaR_p$.
  
\subsection{Proof of Theorem \ref{thm:1}}\label{sec:proof}

 
We first show in step (a) that the result in (i) holds for two independent Pareto($1$) losses.  An induction argument yields the result for the sums of $n$ such losses.
We then extend the result to WNAID 
Pareto losses in step (b), and further to
super-Pareto losses in step (c). 
Finally, step (d) proves the result in (ii)   by applying   (i) and  analysis on the combination of indicator functions, where the regularity assumption of $X$ is needed.

 (a) Let $Y,Y_{1},\dots,Y_{n}\sim \mathrm{Pareto}(1)$ such that $Y_{1},\dots,Y_{n}$ are independent and $\Delta^+_n=\Delta_n\cap (0,1)^n$.  It suffices to show 
\begin{equation}\label{eq:IN1}
\p(\theta_1Y_1+\dots+\theta_nY_n\ge x)>\frac{1}{x} \mbox{~for all $x\in (1,\infty)$ and $(\theta_1,\dots,\theta_n)\in \Delta_n^+$}.
\end{equation}
Let $\delta=(x-1+\theta_n)/\theta_n$.
We will use the following fact: For $(y_1,\dots,y_n)\in(1,\infty)^n$, if $y_n\ge \delta$, then 
$$\theta_1y_1+\dots+\theta_ny_n\ge (\theta_1+\dots+\theta_{n-1})+\theta_n\delta=1-\theta_n+x-1+\theta_n=x.$$
We first show the case of $n = 2$. 
For all $x\in (1,\infty)$ and $(\theta_1,\theta_2)\in \Delta_2^+$ (see Figure \ref{fig:r3-1}), 
\begin{figure}
\begin{center}\includegraphics[height=6cm]{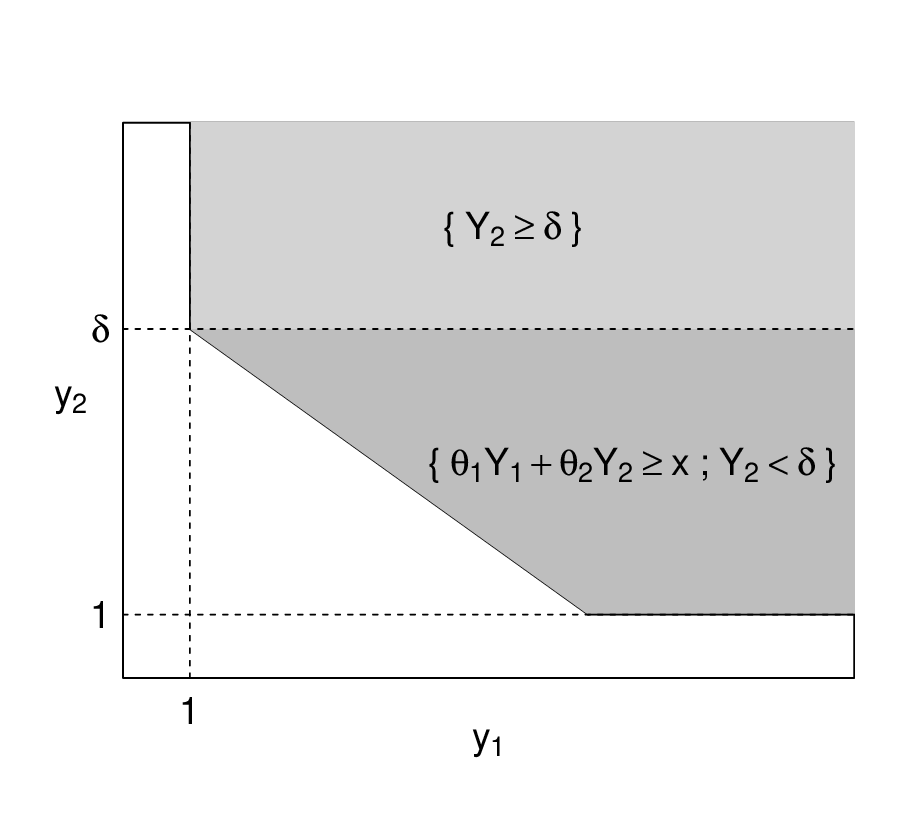}
\caption{An illustration of $\{\theta_1Y_1+\theta_2Y_2\ge x\} = \{ Y_2\ge \delta\}\cup \{\theta_1Y_1+\theta_2Y_2\ge x;Y_2< \delta\}$, where  $x=5$, $\theta_1 = 2/5$, $\theta_2 =3/5$, and $\delta=23/3$}
\label{fig:r3-1}
\end{center}
\end{figure}
\begin{align}
\p(\theta_1Y_1+\theta_2Y_2\ge x)&=\p(Y_2\ge \delta)+\p(\theta_1Y_1+\theta_2Y_2\ge x;Y_2< \delta) \nonumber \\ 
&\ge \p(Y_2\ge \delta)+\p(\theta_1Y_1+\theta_2\ge x;Y_2< \delta)\nonumber \\
&=\frac{1}{\delta}+\frac{\theta_1}{x-\theta_2}\(1-\frac{1}{\delta}\) \label{eq:proofIN1}  \\ 
&>\frac{1}{\delta}+\frac{\theta_1}{x}\(1-\frac{1}{\delta}\)  \nonumber  \\ 
&= \frac{1}{x\delta }(x+ \delta-\theta_2 \delta  -1+\theta_2)=\frac{1}{x}.\nonumber
\end{align}
Next, let $n>2$ and assume \eqref{eq:IN1} holds true for the case of $n-1$. Then, for all $x\in (1,\infty)$ and $(\theta_1,\dots,\theta_n)\in \Delta_n^+$,
\begin{align}
\p(\theta_1Y_1+\dots+\theta_nY_n\ge x)&=\p(Y_n\ge \delta)+\p(\theta_1Y_1+\dots+\theta_nY_n\ge x;Y_n< \delta) \nonumber \\ 
&\ge \p(Y_n\ge \delta)+\p(\theta_1Y_1+\dots+\theta_{n-1}Y_{n-1}\ge x-\theta_n;Y_n< \delta)\nonumber \\
&=\p(Y_n\ge \delta)+\p\(\frac{\theta_1Y_1+\dots+\theta_{n-1}Y_{n-1}}{1-\theta_n}\ge \frac{x-\theta_n}{1-\theta_n}\)\p(Y_n< \delta) \label{eq:proofIN2}\\
&\ge \frac{1}{\delta}+ \frac{1-\theta_n}{x-\theta_n}\(1-\frac{1}{\delta}\)  \nonumber \\
&> \frac{1}{\delta}+ \frac{1-\theta_n}{x}\(1-\frac{1}{\delta}\) =\frac{1}{x}. \nonumber 
\end{align}
By induction,   \eqref{eq:maineq1} holds for  $n$ independent $\mathrm{Pareto}(1)$ random variables.

(b)  Since $\{(y_1,\dots,y_n):\theta_1y_1+\dots+\theta_ny_n<t\}$ is a decreasing set, if $Y_1,\dots,Y_n$ are weakly negatively associated, for any $t>1$ and $n\ge 2$,
\begin{align*}
\p(\theta_1Y_1+\dots+\theta_{n-1}Y_{n-1}\ge t;Y_n<\delta)&=\p(Y_n<\delta)-\p(\theta_1Y_1+\dots+\theta_{n-1}Y_{n-1}< t;Y_n<\delta)\\
&\ge \p(Y_n<\delta)-\p(\theta_1Y_1+\dots+\theta_{n-1}Y_{n-1}< t)\p(Y_n<\delta)\\
&=\p(\theta_1Y_1+\dots+\theta_{n-1}Y_{n-1}\ge t)\p(Y_n<\delta).
\end{align*}
Therefore, if $Y_1,\dots,Y_n$ are weakly negatively associated, the equal signs in \eqref{eq:proofIN1} and \eqref{eq:proofIN2} will become greater-than-or-equal signs and the proof in step (a) can still go through. Hence, \eqref{eq:maineq1} holds for weakly negatively associated $\mathrm{Pareto}(1)$ random variables.

(c) To show \eqref{eq:maineq1} for WNAID super-Pareto random variables, 
the following two lemmas will be helpful. Their proofs are   in Appendix \ref{app:r3}, and they do not assume super-Pareto distributions. 

\begin{lemma}\label{lem:WNApareto} 
Suppose that $X\laweq f(Y)$ for an increasing    convex  function $f$, $X_1\laweq X$, and 
 $X_1,\dots,X_n$ are  WNAID.    There exist WNAID  random variables $Y_1,\dots,Y_n $ with $Y_1\laweq Y$ such that $$(X_1,\dots,X_n)\laweq(f(Y_1),\dots,f(Y_n)).$$
\end{lemma}
\begin{lemma}
\label{lem:r3-2}
Let $Y,Y_1,\dots,Y_n$ be any random variables, and $(\theta_1,\dots,\theta_n)\in \Delta_n$.
If   $Y  \le_{\rm st}\sum_{i=1}^n\theta_{i} Y_{i}$, 
then   $
f(Y) \le_{\rm st}\sum_{i=1}^n\theta_{i}f(Y_{i})
 $ for any increasing convex function $f$. 
\end{lemma}

By Lemma \ref{lem:WNApareto} and the definition of super-Pareto distribution, for any WNAID super-Pareto random variables $X_1,\dots,X_n$ with $X_1\laweq X$,  there exist WNAID  $Y_{1},\dots,Y_{n}\sim \mathrm{Pareto}(1)$ such that $(X_1,\dots,X_n)\laweq(f(Y_1),\dots,f(Y_n))$. Therefore, for \eqref{eq:maineq1}, it suffices to show 
\begin{equation} \label{eq:f(y)1} 
f(Y) \le_{\rm st}\sum_{i=1}^n\theta_{i}f(Y_{i}),
\end{equation} 
which follows by combining Lemma \ref{lem:r3-2} and the result in step (b).  As it is  shown in the previous steps that  $ Y<_{\rm st}\sum_{i=1}^n\theta_iY_i$, it is clear that  for $t>f(1)$, $\p\(\sum_{i=1}^n\theta_{i}f(X_i)>  t\)> \p\(f(X)> t\)$,  where $\theta_i>0$ for at least two $i\in [n]$.  Hence, the strictness statement after \eqref{eq:maineq1} also holds.

 (d) Next we show (ii).
 Let $A_1,\dots,A_n$ be events  independent of $(X_1,\dots,X_n)$ and event $A$ be independent of $X$ satisfying $\p(A)=\sum_{i=1}^n \theta_i\p(A_i)$.   
 For $S\subseteq [n]$, let $B_S=\(\bigcap_{i\in S }A_i\)\cap\(\bigcap_{i\in S^c }A_i^c\)$. For $(\theta_1,\dots,\theta_n)\in \Delta_n$, we write 
$$\sum_{i=1}^n\theta_iX_i\id_{A_i}=\sum_{S\subseteq[n]}\id_{B_S}\sum_{i\in S}\theta_iX_i.$$
By \eqref{eq:maineq1}, $\sum_{i\in S}\theta_iX_i\ge_{\rm st}\sum_{i\in S}\theta_iX$ for any $S\subseteq[n]$. 
As first-order stochastic dominance is closed under mixture (e.g., Theorem 1.A.3 (d) of \cite{SS07}), $\sum_{i\in S}\theta_iX_i\id_{B_S}\ge_{\rm st}\sum_{i\in S}\theta_iX\id_{B_S}$ for any $S\subseteq[n]$.
Since $B_S$ and $B_R$ are mutually exclusive for any distinct $S,R\subseteq[n]$, we have
\begin{align*}
\sum_{i=1}^n\theta_iX_i\id_{A_i}=\sum_{S\subseteq[n]}\id_{B_S}\sum_{i\in S}\theta_iX_i\ge_{\rm st}\sum_{S\subseteq[n]}\sum_{i\in S}\theta_iX\id_{B_S}.
\end{align*}

Finally, 
we need to show
$\sum_{S\subseteq[n]}(\sum_{i\in S}\theta_i)X\id_{B_S}\ge_{\rm st} X\id_{A}$.
For this, we prove the following statement.  
For  mutually exclusive events  $B_1,\dots,B_n$ 
 independent of $X$ and $(c_1,\dots,c_n)\in[0,1]^n$, we have
\begin{align}\label{eq:R2-1}
X\id_{B}\le_{\rm st}\sum_{i=1}^n c_i X\id_{B_i},\end{align}
where $B$ is an event independent of $X$ satisfying $\p(B)=\sum_{i=1}^nc_i\p(B_i)$. 
To show this, first note that the statement is clearly true if $c_1=\dots=c_n=0$. If any components of $(c_1,\dots,c_n)$ are zero, the problem simply reduces its dimension. Hence, we assume that $(c_1,\dots,c_n)\in(0,1]^n$ for the rest of the proof. 
Let the survival function of $X$ be $\p(X>x)=1/g(x)$ for $x\ge z_X$
and $\p(X>x)=1$ for $x<z_X$. 
As $X$ is regular, $z_X > 0 $, 
the concavity of $g$, and $g(x)\le x/z_X$ for $x\ge z_X$ together imply
$g(t/c)\le g(t)/c$ for $t\ge 0$ and $c\in(0,1]$. For $t\ge z_X$, as $B_1,\dots,B_n$ are mutually exclusive and $c_i\in(0,1]$ for all $i\in[n]$,
\begin{align*}
\p\(\sum_{i=1}^n c_i X\id_{B_i}\le t\)
&=1-\sum_{i=1}^n\frac{\p(B_i) }{g(t/c_i)} \le 1-\frac{1}{g(t)}\sum_{i=1}^nc_i\p(B_i) = 1-\frac{\p(B)}{g(t)}=\p(X\id_{B}\le t). 
\end{align*}
For $ t\in [0,z_X)$, $$\p\(\sum_{i=1}^n c_i X\id_{B_i}\le t\)\le\p\(\sum_{i=1}^n c_i X\id_{B_i}\le z_X\)\le 1-\frac{\p(B)}{g(z_X)}=\p(X\id_{B}\le t), $$ where we used $g(z_X)=1$, implied by $1\le g(z_X)\le z_X/z_X$. This yields \eqref{eq:R2-1}.    
As  $\sum_{i\in S}\theta_i\in[0,1]$ for any $S\subseteq[n]$, and 
\begin{align*}
\sum_{S\subseteq[n]}\p(B_S)\sum_{i\in S}\theta_i=\sum_{j=1}^{n}\theta_j\sum_{S\subseteq[n], j\in S}\p( B_{S})=\sum_{j=1}^n\theta_i\p(A_j)= \p(A),
\end{align*}
$\sum_{S\subseteq[n]}(\sum_{i\in S}\theta_i)X\id_{B_S}\ge_{\rm st} X\id_{A}$  follows from \eqref{eq:R2-1} and the desired result  \eqref{eq:maineq2} holds.


\subsection{A few remarks}

The next few remarks discuss the relation of Theorem \ref{thm:1} to the literature and some technical issues.

\begin{remark}[EVT]
In the literature of EVT, it has been observed that, for iid extremely heavy-tailed Pareto losses $X_1,\dots,X_n$,  
$$ \p\(\frac 1n\sum_{i=1}^n X_{i} >  t\)   \ge  \p\(X> t\)  $$  holds true asymptotically as $t\to \infty$; see, e.g.,  \cite{KGT04}, \cite{AAK06}, and \cite{ELW09}.
 Theorem \ref{thm:1} implies that the same inequality holds for any $t\in \R$ regardless of whether $t$ is large enough. This gives rise to implications for decision makers whose preferences are not determined purely by the tail behaviour of risks. 
\end{remark}

\begin{remark}[Stable distributions]
As an important class of heavy-tailed distributions, stable distributions have frequently appeared in the analysis of portfolio diversification (e.g., \cite{ibragimov2005new,IW07,ibragimov2008portfolio}). Using majorization order, \cite{ibragimov2005new} showed that diversification increases the risk of a portfolio which consists  of iid stable random variables without finite mean. In  particular, if the stable random variables are one-sided on the positive axis, diversification will increase the total loss in  first-order stochastic dominance; \cite{ibragimov2010optimal} applied  this result to study the problem of optimal bundling strategies for extremely heavy-tailed valuations.
On the other hand,  if the stable random variables are not one-sided, diversification will make the total loss ``more spread out", hence different from  first-order stochastic dominance. These results were extended to the case when losses are convex transformations of iid infinite-mean stable random variables  in \cite{ibragimov2008portfolio}. For iid symmetric infinite-mean stable random variables truncated by a sufficiently large number, diversification still makes the total loss ``more spread out", as shown by \cite{IW07}.  
\end{remark}

\begin{remark}[Notions of negative dependence] 
\label{rem:ND}
Among the following notions of negative dependence, weak negative association is weaker than (a) and (b) below, and stronger than (c).
\begin{enumerate}[(a)]
\item  A random vector $\mathbf X= (X_1,\dots,X_n)$ is \emph{negatively associated}   if for every pair of disjoint sets $A$, $B$ of $[n]$ and any functions  $f$ and $g$ both increasing or decreasing coordinate-wise, provided the covariance below exists,
\begin{equation}\label{eq:NA}
\cov(f(\mathbf X_A),g(\mathbf X_B))\le 0,
\end{equation}
where $\mathbf X_A=(X_k)_{k\in A}$ and $\mathbf X_B=(X_k)_{k\in B}$ (\cite{AS81}  and \cite{JP83}). 
It is known that random vectors following multivariate normal distributions with non-positive correlations are negatively associated, and so are those obtained from increasing transforms of such normal random vectors  (\cite{JP83}). Choosing $A=  \{i\}$, $B=[n]\setminus A$, $f(y)=\id_{\{y\le x\}}$ and  $g(\mathbf y)=\id_{\{\mathbf y\in S\}}$ in \eqref{eq:NA}  yields \eqref{eq:WNA}, and hence weak negative association is implied. 
\item 
A random vector  $\mathbf X$ is \emph{negative regression dependent} if for every $i\in[n]$,  the random variable $\E[g(\mathbf X_{-i})| X_i]$
is a decreasing function of $X_i$ for any   coordinate-wise increasing function $g$ such that the conditional expectation exists; see \cite{L66}, who only formulated the case $n=2$. This notion for general $n>2$ is called negative dependence through stochastic ordering by \cite{block1985concept}. 
To check that this notion is stronger than weak negative association, it suffices to take the function $g(\mathbf x)=-\id_{\{\mathbf x\in S\}}$ for a decreasing set $S\subseteq \R^{n-1}$.
 
 \item
 A random vector $\mathbf X=(X_1,\dots,X_n)$ is \emph{negatively orthant dependent}  if for all $\mathbf x =(x_1,\dots,x_n) \in\R^n$,  $\p(\mathbf X\le \mathbf x)\le \prod_{i=1}^n\p(X_i\le x_i)$ and $\p(\mathbf X >  \mathbf x)\le \prod_{i=1}^n\p(X_i > x_i)$.
The fact that \eqref{eq:WNA} implies negative orthant dependence follows from \eqref{eq:WNA} and \eqref{eq:inset}. 
Negative orthant dependence is not sufficient for the proof of Theorem \ref{thm:1}, because in the  proof (see Section \ref{sec:proof}, step (b)) we need the inequality 
$\p(\theta_1 Y_1+\dots+\theta_{n-1}Y_{n-1}<t, Y_n\le \delta)  \le \p(\theta_1 Y_1+\dots+\theta_{n-1}Y_{n-1}<t)\p( Y_n\le \delta) $, 
which holds under weak negative association of $(Y_1,\dots,Y_n)$ but not under negative orthant dependence. 

\end{enumerate}
\end{remark}

\begin{remark}[Majorization]
\label{rem:open}
Let $X_{1},\dots,X_{n}$ be  iid super-Pareto random variables.
Inspired by Theorem \ref{thm:1},
a question is 
 whether  
\begin{align}\label{eq:conj}
 \sum_{i=1}^n\eta_i X_{i}\le_{\rm st} \sum_{i=1}^n\theta_iX_{i}
\end{align}
 holds for two vectors $(\theta_{1},\dots,\theta_n)\in \Delta_n$ and $(\eta_{1},\dots,\eta_n) \in \Delta_n$
increasing in majorization order; that is, $\sum_{i=1}^n \phi(\theta_i) \le \sum_{i=1}^n \phi(\eta_i)$ for all continuous and convex functions $\phi$ (see \cite{MOA11}). Theorem \ref{thm:1} corresponds to the case   $(\eta_{1},\dots,\eta_n)=(1,0,\dots,0)$. 
This question seems to be beyond the current techniques. 
For results similar to \eqref{eq:conj} on some   distributions, see \cite{proschan1965peakedness}
 and \cite{ibragimov2005new}.

\end{remark}

\section{Conclusion}\label{sec:7}

Our main result (Theorem \ref{thm:1}) establishes   that a weighted average of  WNAID super-Pareto random variables, possibly triggered by  different events, is larger than one such loss  in the  sense of  first-order stochastic dominance. 
This result implies that the diversification of many super-Pareto losses without finite mean increases the risk assessment of a portfolio, according to the superadditivity of VaR.
Some technical questions remain open and are discussed in Remark \ref{rem:open}. 

\begin{appendices}

\section{Proofs of all other results}
\label{app:r3}

\begin{proof}{Proof of Proposition \ref{prop:R2-1}}
We first show the   equivalence statement. 
For the ``$\Leftarrow$" direction, let $\p(X\le x)=1-1/g(x)$ for $x\in[z_X,\infty)$, where
 $g
 $ is strictly increasing and concave 
 on $[z_X,\infty)$.
Let 
$f(y)=g^{-1}(y)$ for $y>g(z_X)$ and $f(y)=z_X$ for $1\le y \le g(z_X)$. 
It is straightforward to see that for any  Pareto$(1)$ random variable $Y$, $f(Y)\laweq X$. Next, we show the ``$\Rightarrow$" direction. For $x<\infty$,  the right-continuous generalized inverse of $f$ is
$f^{-1+}(x)=\inf \{t:f(t)>x\}.$
For $x\ge f(1)$, $\p(f(Y)\le x)=\p(Y\le f^{-1+}(x))=1-1/f^{-1+}(x)$. As $f$ is increasing, convex, and non-constant, $g:=f^{-1+}$ is strictly increasing and concave. Hence, $g$ is concave and strictly increasing on $[z_X,\infty)$. 

The  statement on regularity follows by observing two facts.
First, $f(1)>0$ implies that $z_X=f(1)>0$. 
Moreover, since $f$ is convex and $f(0)=0$,  we have 
$f(y) \ge y f(1)$ for all $y>1$, which gives $g(x)\le x/z_X$ for $x\ge z_X$ via $g=f^{-1+}$.
\end{proof}

\begin{proof}{Proof of Proposition \ref{prop:equality}}
Note that \eqref{eq:*} implies that 
$\ES_p(X)\le \ES_p(\sum_{i=1}^n \theta_i X_i)$
for all $p\in (0,1)$, where $\ES_p$ is defined  as 
$$\ES_{p}(X)=\frac{1}{1-p}\int_{p}^{1}\VaR_{u}(X)\mathrm{d}u.$$  Since $\ES_p$ is convex and $X_1,\dots,X_n$ are identically distributed, we have 
$$\ES_p(X)\le \ES_p\left (\sum_{i=1}^n  \theta_i   X_i\right) \le\theta_i  \sum_{i=1}^n\ES_p(X_i) = \ES_p(X),~~p\in (0,1). $$
Using  positive homogeneity of $\ES_p$, it follows that the equality $ \sum_{i=1}^n\ES_p(\theta_i  X_i)= \ES_p(\sum_{i=1}^n \theta_i X_i)$ holds for each $p\in(0,1)$. By Theorem 5 of \cite{WZ21}, this implies that $(\theta_1 X_1,\dots,\theta_n X_n)$ is $p$-concentrated  for each $p$; this result requires $X_1,\dots,X_n$ to have finite mean. Using Theorem 3 of \cite{WZ21},  the above condition implies
  that $(X_1,\dots,X_n)$ is comonotonic. For definitions of comonotonicity and $p$-concentration, see \cite{WZ21}. 
Since $X_1,\dots,X_n$ are identically distributed, comonotonicity further implies that $X_1=\dots=X_n$ almost surely. 
\end{proof}

\begin{proof}{Proof of Lemma \ref{lem:WNApareto}}  
If $f$ is constant, then there is nothing to show. 
If $f$ is not constant, then there exists $z\in \R$ such that $f(x)$ is strictly increasing for $x>z$.
Denote by $q=\p(X=z_X )$. 
For $i\in [n]$, let $W_i$ be a uniform transform of $X_i$, that is, $W_i$ is a standard uniform random variable such that $F^{-1}_{X} (W_i)=X_i$ a.s.
A uniform transform always exists in an atomless probability space; see Lemma A.32 of \cite{FS16}. 
For all $i\in[n]$, let $A_i=\{X_i\le z_X\}$ and 
$$U_i=qV_i\id_{A_i}+W_i\id_{A_i^c},$$
where $V_1,\dots,V_n$ are iid standard uniform random variables   independent of $(X_1,\dots,X_n)$.
Note that for each $i\in [n]$, $U_i$ is also a uniform transform of $X_i$, and 
the distribution of $(U_1,\dots,U_n)$ is one possible copula of $(X_1,\dots,X_n)$.
Let $Y_i=F^{-1}_Y(U_i)$ for $i\in [n]$. 
Note that $F^{-1}_X=f\circ F^{-1}_Y$ and $f$ is strictly increasing for $x>z$. 
Hence $Y_i$ is a strictly increasing function of $X_i$ given $V_i$ for each $i\in [n]$. Moreover, $(X_1,\dots,X_n)\laweq(f(Y_1),\dots,f(Y_n))$ as they have the same copula and marginal distributions.
It remains to show that $Y_1,\dots,Y_n$ are weakly negatively associated. For any decreasing set $A\subseteq\R^{n-1}$ and $x\in \R$ with $F_Y(x)>0$,   denote by $\beta =1/F_Y(x)$, and we have 
\begin{align*}
\p((Y_1,\dots,Y_{n-1})\in A|Y_n\le x)&=\p((Y_1,\dots,Y_{n-1})\in A,Y_n\le x)\beta \\
&=\E[\p((Y_1,\dots,Y_{n-1})\in A,Y_n\le x|V_1,\dots,V_n)]\beta\\
&\le \E[\p((Y_1,\dots,Y_{n-1})\in A|V_1,\dots,V_n)\p(Y_n\le x|V_1,\dots,V_n)] \beta \\
&=\E[\p((Y_1,\dots,Y_{n-1})\in A|V_1,\dots,V_{n-1})\p(Y_n\le x|V_n)] \beta \\
&=\E[\p((Y_1,\dots,Y_{n-1})\in A|V_1,\dots,V_{n-1})]\E[\p(Y_n\le x|V_n)] \beta \\
&=\p((Y_1,\dots,Y_{n-1})\in A)\p(Y_n\le x) \beta =\p((Y_1,\dots,Y_{n-1})\in A),
\end{align*}
where the inequality holds because  for each $i\in [n]$, conditional on $V_i$, $Y_i$ is a strictly increasing function of $X_i$, and  $X_1,\dots,X_n$ are weakly negatively associated.
Therefore,  $Y_1 ,\dots, Y_n$ are also weakly negatively associated. 
 \end{proof}

\begin{proof}{Proof of Lemma \ref{lem:r3-2}}
  As $f$ is convex and increasing, we have $f( Y)\le_{\rm st}f(\sum_{i=1}^n\theta_iY_i)\le \sum_{i=1}^n\theta_{i}f(Y_{i}),$  where the first inequality holds as \eqref{eq:maineq1} holds for $Y_{1},\dots,Y_{n}$ and the second inequality is due to the convexity of $f$. 
\end{proof}

\end{appendices}

\ACKNOWLEDGMENT{The authors thank an Area Editor, an Associate Editor, and three anonymous referees for various relevant comments on the paper. In particular, the inclusion of negative dependence was inspired by a comment from an Associate Editor, and the open question in Remark \ref{rem:open} was inspired by comments from the referees. 
The authors thank Yuming Wang and Wenhao Zhu for  ideas that led to a preliminary version of  Theorem \ref{thm:1}. 
We also thank  Hansj\"org Albrecher, Jan Dhaene, John Ery, Taizhong Hu, Massimo Marinacci,  Alexander Schied, and Qihe Tang for helpful comments on a previous version of the paper.
Ruodu Wang is supported by the Natural Sciences and Engineering Research Council of Canada (RGPIN-2018-03823 and CRC-2022-00141).}

 
{

}
\end{document}